\newtheorem{theorem}{Theorem}[section]
\newtheorem{proposition}[theorem]{Proposition}
\newtheorem{lemma}[theorem]{Lemma}
\newtheorem{corollary}[theorem]{Corollary}
\theoremstyle{definition}
\newtheorem{definition}[theorem]{Definition}
\newtheorem{remark}[theorem]{Remark}
\definecolor{backcolour}{rgb}{0.63, 0.79, 0.95}
\lstdefinestyle{mystyle}{
  backgroundcolor=\color{backcolour},
  basicstyle=\ttfamily\footnotesize,
  breakatwhitespace=false,         
  breaklines=true,                 
  captionpos=b,                    
  keepspaces=true,                 
  numbers=left,                    
  numbersep=5pt,                  
  showspaces=false,                
  showstringspaces=false,
  showtabs=false,                  
  tabsize=2
}
\begin{document}
\author{Mark Whitmeyer \thanks{Arizona State University, \href{mailto:mark.whitmeyer@gmail.com}{mark.whitmeyer@gmail.com}. I thank Anujit Chakraborty, Rosemary Hopcroft, Ariel Rubinstein, Joseph Whitmeyer, and seminar audiences at Georgetown, Georgia, Indiana, NYU, and UCL for their comments. Draft date: \today.}}
\title{Calibrating the Subjective}
\maketitle

\begin{abstract}
I conduct \cite{rabin2000risk}'s calibration exercise in the subjective expected utility realm. I show that the rejection of some risky bet by a risk-averse agent only implies the rejection of more extreme and less desirable bets and nothing more.
\end{abstract}


\setlength{\epigraphwidth}{4.9in}\begin{epigraphs}
\qitem{Einer hat immer Unrecht: aber mit zweien beginnt die Wahrheit. Einer kann sich nicht beweisen: aber zweie kann man bereits nicht widerlegen.}
{Friederich Nietzsche}
\qitem{Of course it is happening inside your head, Harry, but why on earth should that mean that it is not real?}%
      {Albus Dumbledore}
\end{epigraphs}

\section{Introduction}
``PROBABILITY DOES NOT EXIST,'' shouts De Finetti (\cite{de2017theory}) and, in doing so, rejects objective probability in favor of subjective. This approach, and more specifically the subjective expected utility (SEU) paradigm, is a behavioral definition of probability: it ``is a rate at which an individual is willing to bet on the occurrence of an event'' (\cite{nau2001finetti}). This stands in stark contrast to the objectivist view, in which probabilities are fundamental properties of events.

\cite{rabin2000risk}'s seminal calibration theorem demonstrates a striking pathology within the objective expected utility framework: the degree of risk aversion required to reject small-stakes gambles implies absurdly high aversion to larger-stakes gambles. Importantly, Rabin's critique presumes objective probabilities. What if ``PROBABILITIES DO NOT EXIST?"

This paper revisits the calibration puzzle through the lens of SEU. I conduct an analogous exercise--if a decision-maker (DM) prefers a sure thing to a risky gamble over some region of wealths, what are the other risky gambles that must be subjectively inferior to the sure thing?--and show that the pathologies identified by Rabin vanish in the subjective realm. I show that the only risky gambles that must be inferior to the sure thing are precisely those that are \textit{unambiguously worse} than the risky gamble that had originally been deemed inferior.

\cite{rabin2000risk} spurred an outpouring of papers scrutinizing the calibration issue. Notably, \cite{safra2008calibration} show that the issue is not idiosyncratic to expected utility: it persists in many settings in which the DM is not an expected utility maximizer. \cite{rubinstein2006dilemmas} argues that the paradox illustrates an issue with consequentialism. Others, e.g., \cite{cox2013} and \cite{bleichrodt2019resolving}, maintain that reference dependence is key to resolving the calibration issue.

Crucially, in these works that have followed, the probabilities remain, nevertheless, \textit{objective}. Here, I show that if this objectivity is not assumed, the extra freedom afforded by subjective probabilities is enough to avoid the extreme calibration predictions, despite the fact that my DM is an expected-utility maximizer with a utility function defined on her terminal wealth.

But won't the right collection of decisions pin down the DM's belief, in which case we are back in the objective setting? Not if the DM's utility is state-dependent, which I assume.\footnote{Note that with a suitable \textit{proxy}, one can cleverly elicit beliefs \citep{tsakas2025belief}.} What if the lotteries \textit{are} objective, as in, e.g., a laboratory setting? Here I appeal to De Finetti: why should the lotteries implied by the apparatus governing a DM's decision-making be the objective lotteries given to the DM?




\section{The Formal Setting}

There are two binary-action menus, \(A = \left\{s,r\right\}\)--mnemomic for ``safe'' and ``risky''--and \(\hat{A} = \left\{s,\hat{r}\right\}\); and two states, \(\Theta = \left\{0,1\right\}\).\footnote{\label{footnote2}The binary setting is assumed merely for convenience--an analog of Theorem \ref{theorem} holds for general state spaces.} When faced with either menu, the decision-maker (DM) has a common subjective belief \(\mu \in \Delta\left(\Theta\right) = \left[0,1\right]\), where \(\mu \coloneqq \mathbb{P}(1)\) and a common \textit{state-dependent} risk-averse utility function in money \(u \colon \mathbb{R} \times \Theta \to \mathbb{R}\) that is strictly increasing and weakly concave in each state. \(\mathcal{U}\) denotes the class of such functions. \(u_\theta \colon \mathbb{R} \to \mathbb{R}\) denotes the DM's utility function in state \(\theta\). State dependence plays no role except to make my theorem stronger. It strengthens the sufficiency direction, and for my necessity proof by contraposition, I do not need it, constructing a state-independent utility function.

The DM's initial wealth is \(w \in \mathbb{R}\). The risk-free action \(s\) yields a state-independent monetary payoff of \(0\). The first risky action \(r\)'s monetary payoff is \(\alpha > 0\) in state \(1\) and \(-\beta < 0\) in state \(0\). Likewise, \(\hat{r}\) yields \(\hat{\alpha} > 0\) and \(-\hat{\beta} < 0\) in states \(1\) and \(0\). We assume that the DM is a subjective expected utility maximizer, preferring \(s\) to \(r\) if and only if \[\mu u_1(w) + (1-\mu)u_0(w) \geq \mu u_1(w+\alpha) + (1-\mu) u_0(w-\beta)\text{,}\] and \(s\) to \(\hat{r}\) if and only if \[\mu u_1(w) + (1-\mu)u_0(w) \geq \mu u_1(w+\hat{\alpha}) + (1-\mu) u_0(w-\hat{\beta})\text{.}\]
Suppressing the dependence on \(\mu\) and \(w\), we let \(s \succeq r\) represent the first inequality and \(s \succeq \hat{r}\) the second. \(\succ\) indicates the strict counterpart.

Suppose there exists a nonempty set of wealths \(W \subseteq \mathbb{R}\) such that at each \(w \in W\) the DM prefers \(s\) to \(r\), \textit{given her utility function and subjective belief}. What are the properties of \(\hat{r}\) such that the DM must also prefer \(s\) to \(\hat{r}\)?

\begin{definition}
    We say that \textcolor{red}{The Safe Option Must Remain Optimal} if, for all \(u \in \mathcal{U}\), \(s \succeq r\) for all \(w \in W\) implies \(s \succeq \hat{r}\) for all \(w \in W\).
\end{definition}
\begin{definition}\label{worse}
    We say that \textcolor{red}{The Risky Option Becomes Worse} if \(\hat{\beta} \geq \beta\), and an \textbf{Actuarial Worsening} transpires: \[\frac{\alpha}{\beta} \geq \frac{\hat{\alpha}}{\hat{\beta}}\tag{\(1\)}\label{in1}\text{.}\].
\end{definition}
\begin{theorem}\label{theorem}
    The safe option must remain optimal if and only if the risky option becomes worse.
\end{theorem}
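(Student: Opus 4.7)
The plan is to prove the two directions of the biconditional in turn.

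For sufficiency, I build a bridge bet $\tilde{r} = (\lambda\alpha, -\lambda\beta) = (\lambda\alpha, -\hat\beta)$ with $\lambda := \hat\beta/\beta \geq 1$. The Actuarial Worsening inequality rearranges to $\lambda\alpha \geq \hat\alpha$, so $\tilde{r}$ statewise weakly dominates $\hat{r}$; hence $\tilde{r} \succeq \hat{r}$ at every wealth and for every $u \in \mathcal{U}$ (using monotonicity of each $u_\theta$). It therefore suffices to show that, at each $w \in W$, $s \succeq r$ implies $s \succeq \tilde{r}$. Fix such a $w$ and define $f(t) := \mu u_1(w+t\alpha) + (1-\mu) u_0(w-t\beta)$; concavity of $u_0, u_1$ composed with the affine maps of $t$ makes $f$ concave, and $f(0), f(1), f(\lambda)$ are the subjective utilities of $s, r, \tilde{r}$ respectively. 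Writing $1 = \lambda^{-1}\cdot \lambda + (1-\lambda^{-1})\cdot 0$ and invoking concavity:
\[
f(1) \geq \lambda^{-1} f(\lambda) + (1-\lambda^{-1}) f(0),
\]
so $f(\lambda) \leq \lambda f(1) - (\lambda-1) f(0) \leq f(0)$, where the last inequality uses $f(1) \leq f(0)$. Hence $s \succeq \tilde{r} \succeq \hat{r}$ at $w$.

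For necessity I proceed by contraposition: assuming the risky option does not become worse, I will exhibit a state-independent CARA utility $u(x) = -e^{-\gamma x}$ and belief $\mu$ under which $s \succeq r$ holds but $s \succeq \hat{r}$ fails, uniformly over $W$. Under CARA both conditions are $w$-independent and reduce to $F(\gamma) := \mu e^{-\gamma\alpha} + (1-\mu) e^{\gamma\beta} \geq 1$ and $G(\gamma) := \mu e^{-\gamma\hat\alpha} + (1-\mu) e^{\gamma\hat\beta} \geq 1$, so any $(\mu,\gamma)$ with $F(\gamma) \geq 1 > G(\gamma)$ is a counterexample for every $W$. In the easy sub-case $\hat\alpha\beta > \alpha\hat\beta$ (the ratio fails), I choose $\mu$ with $\hat\beta/\hat\alpha < \mu/(1-\mu) < \beta/\alpha$; the first-order terms of $F-1$ and $G-1$ about $\gamma=0$ then have opposite signs, so small $\gamma>0$ works.

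The main obstacle is the remaining sub-case, $\hat\beta < \beta$ with $\hat\alpha\beta \leq \alpha\hat\beta$, in which linear utility cannot separate $r$ from $\hat{r}$. I exploit the asymmetry in downsides by driving $\mu$ toward $1$. Letting $\gamma_r, \gamma_{\hat r}$ denote the nonzero solutions of $F(\gamma_r)=1$ and $G(\gamma_{\hat r})=1$, defined implicitly via $\mu/(1-\mu) = (e^{\gamma\beta}-1)/(1-e^{-\gamma\alpha})$ and its hatted analog, one obtains $\gamma_r \sim \beta^{-1}\log[(1-\mu)^{-1}]$ and $\gamma_{\hat r} \sim \hat\beta^{-1}\log[(1-\mu)^{-1}]$ as $\mu \to 1$, so $\gamma_r < \gamma_{\hat r}$ with unbounded gap; any $\gamma \in (\gamma_r, \gamma_{\hat r})$ delivers the required separation. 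The delicate step is upgrading this asymptotic ordering into a strict inequality at a concrete $\mu$, which I would execute by comparing the monotone function $\phi(\gamma;\alpha,\beta) := (e^{\gamma\beta}-1)/(1-e^{-\gamma\alpha})$ to $\phi(\gamma;\hat\alpha,\hat\beta)$ and noting that their ratio tends to infinity as $\gamma\to\infty$, so both values eventually exceed any fixed $\mu/(1-\mu)$ in the prescribed order.
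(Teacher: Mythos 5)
Your proof is correct, and both directions take a genuinely different route from the paper's. For sufficiency, the paper splits into the case \(\alpha \geq \hat{\alpha}\) (statewise dominance of \(r\) over \(\hat{r}\)) and the case \(\alpha < \hat{\alpha}\), where it compares the two indifference beliefs chord-by-chord via the three-chord lemma; your bridge bet \(\tilde{r} = (\lambda\alpha, -\hat{\beta})\) with \(\lambda = \hat{\beta}/\beta\) unifies the two cases, and the one-dimensional concavity of \(f(t) = \mu u_1(w+t\alpha) + (1-\mu)u_0(w-t\beta)\) along the ray from \(s\) through \(r\) does the work of the chord comparisons in one line, while still fully accommodating state dependence. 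For necessity, the paper constructs a utility that is linear on \((-\hat{\beta},\hat{\alpha})\) with CARA tails governed by a parameter \(k\), and must then verify in an appendix that the \(\{s,r\}\)-indifference belief exceeds \(\hat{\beta}/(\hat{\alpha}+\hat{\beta})\) in seven separate wealth regions as \(k \to \infty\); your pure CARA utility makes both preference comparisons wealth-independent, so no case analysis over wealth is needed and the whole problem reduces to finding \(\gamma\) with \(\phi(\gamma;\hat{\alpha},\hat{\beta}) < \phi(\gamma;\alpha,\beta)\), which holds for all large \(\gamma\) because \(\phi(\gamma;a,b) = e^{\gamma b}(1+o(1))\) and \(\beta > \hat{\beta}\). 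Your counterexample is in fact stronger than the paper's --- it delivers \(\hat{r} \succ s \succ r\) at every wealth, so it also yields the paper's final Proposition as a by-product. To tighten the one step you flag as delicate, fix a large \(\gamma\) first and then choose \(\mu\) with \(\phi(\gamma;\hat{\alpha},\hat{\beta}) < \mu/(1-\mu) \leq \phi(\gamma;\alpha,\beta)\), rather than fixing \(\mu\) and hunting for \(\gamma\) between the two implicit roots; the monotonicity of \(\phi\) in \(\gamma\) is then not even needed.
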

Let us discuss the result before proving it formally. Without loss of generality we impose that \(0 \in W\), as we could just conduct this scaling within the DM's utility function. First, we note that an actuarial worsening is with respect to the belief at which a risk-neutral DM is indifferent between \(s\) and \(r\). Given this, it is clear that an actuarial worsening is necessary for the safe action to remain optimal: the class of risk-averse DMs includes those who are risk-neutral and so if the risky option strictly improves in an actuarial sense, there are beliefs close to a risk-neutral DM's indifference belief between \(s\) and \(r\) for which \(\hat{r} \succ s \succ r\)\textit{ for any \(w \in W\)}. 

Second, we observe that if an actuarial worsening transpires but \(\beta > \hat{\beta}\), it must be the case that \(\alpha > \hat{\alpha}\). This means that the new risky action \(\hat{r}\) is \textit{safer} (in the parlance of \cite{safety}) than \(r\); namely, more robust to increases in the DM's risk aversion. We then finish the necessity proof by completing the exercise in contraposition: we construct a \textit{state-independent} utility function that is
\begin{enumerate}
    \item continuous, strictly increasing, and concave on \(\mathbb{R}\), 
    \item kinked at \(\hat{\alpha}\) and \(-\hat{\beta}\), 
    \item linear on \(\left(-\hat{\beta},\hat{\alpha}\right)\), and
    \item of the constant absolute risk aversion class on \(\left[\hat{\alpha}, \infty\right]\) and \(\left[-\infty, -\hat{\beta}\right]\).
\end{enumerate}

The region of linearity means that the DM's indifference belief between \(s\) and \(\hat{r}\) when her wealth is \(0\) is \(\hat{\beta}/(\hat{\alpha} + \hat{\beta}) \). Crucially, the utility function we construct is parametrized in a way that that lets us scale the DM's risk-aversion up in the non-linear portions. Doing this scaling allows us to push the indifference belief between \(s\) and \(r\) to the right for all wealth values, making it so that, initially, the DM can be quite confident that the state is \(1\) yet still prefer \(s\) to \(r\). On the other hand, this confidence means that when she is picking between \(s\) and \(\hat{r}\), the DM prefers \(\hat{r}\). In short, by scaling the risk aversion we can find a belief such that \(s \succ r\) for all \(w \in \mathbb{R}\) yet \(\hat{r} \succ s\) for \(w = 0\), yielding the result.

The sufficiency direction is an easy chain of inequalities and is similar to the first theorem in \cite{hmab}, although we prove that theorem in a different manner, as the utilities there are \textit{state-independent}.


Footnote \ref{footnote2} claims that it is easy to extend Theorem \ref{theorem} to general state spaces. Here is an informal discussion of how. Let \(\Theta\) be a compact subset of the real numbers and let \(\mathcal{R}\), \(\mathcal{S}\), and \(\mathcal{T}\) denote the sets of states in which the risky action's (\(r\)'s) payoff is strictly higher than \(0\), strictly lower than \(0\), and equal to \(0\). The safe action, \(s\), yields a state-independent monetary payoff of \(0\); whereas the risky actions \(r\) and \(\hat{r}\) yield payoffs of \(\alpha_\theta, \hat{\alpha}_\theta > 0\) in each \(\theta \in \mathcal{R}\), \(-\beta_\theta, -\hat{\beta}_\theta < 0\) in each \(\theta \in \mathcal{S}\), and \(0\) in each \(\theta \in \mathcal{T}\).

\begin{remark}
    The safe option must remain optimal if and only if \(\hat{\beta}_{\theta'} \geq \beta_{\theta'}\) for all \(\theta' \in \mathcal{S}\) and \(\alpha_{\theta}/\beta_{\theta'} \geq \hat{\alpha}_{\theta}/\hat{\beta}_{\theta'}\) for all \(\theta \in \mathcal{R}\), for all \(\theta' \in \mathcal{S}\).
\end{remark}
This remark is an easy consequence of the fact that the extreme points of the set of beliefs for which \(s \succeq r\) are the binary distributions supported on states in \(\mathcal{S}\) and \(\mathcal{R}\) and the degenerate distributions on states in \(\mathcal{S} \cup \mathcal{T}\). Consequently, all that is needed is to aggregate pairwise the conditions in Theorem \ref{theorem}.

\section{Proof of Theorem \ref{theorem}}
\noindent \(\left(\Rightarrow\right)\) If there is not an actuarial worsening (Inequality \ref{in1} does not hold), we are done, as there will be subjective beliefs such that for all \(w \in W\), \(\hat{r} \succ s \succ r\) for a risk-neutral DM. So, let Inequality \ref{in1} hold but suppose for the sake of contraposition that \(\beta > \hat{\beta}\), which implies \(\alpha > \hat{\alpha}\).

Now, we construct a state-independent utility function as follows. For \(k \geq 1\), and dropping the subscript from \(u\) due to state-independence, define
\[u(x) \coloneqq \begin{cases}
-\hat{\beta}+\exp\left(k\hat{\beta}\right)-\exp\left(-kx\right) \quad &\text{if} \quad x \leq -\hat{\beta}\\
x \quad &\text{if} \quad -\hat{\beta} < x < \hat{\alpha}\\
\hat{\alpha}+\exp\left(-k\hat{\alpha}\right)-\exp\left(-kx\right) \quad &\text{if} \quad \hat{\alpha} \leq x\text{.}
\end{cases}\]
By construction, \(u\) is continuous, strictly increasing, and weakly concave on \(\mathbb{R}\).\footnote{One can even adapt the construction so that \(u\) is differentiable.} Moreover, when \(w = 0\) the DM's indifference belief under menu \(\left\{s,\hat{r}\right\}\) is \[\hat{\mu}^* \coloneqq \frac{\hat{\beta}}{\hat{\beta} + \hat{\alpha}}\text{.}\]

When \(w \geq \hat{\alpha} + \beta\) or \(w \leq -\hat{\beta}-\alpha\), the DM's indifference belief under menu \(\left\{s,r\right\}\) is
\[\overline{\mu}_k \coloneqq \frac{e^{\alpha k} \left(e^{\beta k} - 1\right)}{e^{\left(\alpha + \beta\right) k} - 1}\text{.}\]
Importantly, \(\overline{\mu}_k\) is increasing in \(k\) and converges to \(1\) as \(k \to \infty\). There are seven other possible regions in which \(w\) can lie. Leaving the details to Appendix \ref{theoremproof}, we show that for any wealth in each region, the DM's indifference belief \(\mu^{i}_k\) (\(i \in \left\{1,\dots,7\right\}\)) is strictly larger than \(\hat{\mu}^*\) provided \(k\) is sufficiently large--in fact, in all but one region, like \(\overline{\mu}_k\), \(\mu^{i}_k \to 1\) as \(k \to \infty\). Consequently, if \(k\) is sufficiently large, there is a belief \(\mu\) such that for all \(w \in W\), \(s \succ r\), yet for \(w = 0\), \(\hat{r} \succ s\). 

\medskip

\noindent \(\left(\Leftarrow\right)\) Please see Appendix \ref{theoremproof}. \hfill \(\blacksquare\)

\section{When the Ranking Must Flip}

We finish with a result concerning situations in which the DM prefers \(s\) to \(r\) for all \(w \in W\) but strictly prefers \(\hat{r}\) to \(s\) \textit{for all} \(w \in W\).

\begin{proposition}
    If the risky option does not become worse, there exists a state-independent \(u \in \mathcal{U}\) and a nondegenerate interval \(\left[\underline{w},\overline{w}\right]\) such that \(\hat{r} \succ s \succeq r\) for all \(w \in \left[\underline{w},\overline{w}\right]\).
\end{proposition}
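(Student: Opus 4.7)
The plan is to split into two cases according to which of the two clauses defining ``the risky option becomes worse'' fails.

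In the easy case, suppose there is no actuarial worsening, i.e., $\alpha/\beta < \hat{\alpha}/\hat{\beta}$, equivalently $\hat{\beta}/(\hat{\alpha}+\hat{\beta}) < \beta/(\alpha+\beta)$. I take the state-independent utility $u(x)=x$, which lies in $\mathcal{U}$, and any subjective belief $\mu$ strictly between these two thresholds. Then, at every wealth $w$, the expected utility from $r$ is $w + \mu\alpha - (1-\mu)\beta < w$, while the expected utility from $\hat{r}$ is $w + \mu\hat{\alpha} - (1-\mu)\hat{\beta} > w$, so $\hat{r} \succ s \succ r$ for every $w \in \mathbb{R}$ and any nondegenerate interval works.

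In the harder case, the actuarial worsening $\alpha/\beta \geq \hat{\alpha}/\hat{\beta}$ holds but $\hat{\beta} < \beta$, which forces $\alpha > \hat{\alpha}$. This is precisely the configuration handled in the necessity direction of Theorem \ref{theorem}: the natural candidate is the same state-independent $u$, assembled from two CARA tails glued to a linear middle segment and parametrized by $k \geq 1$. I apply the theorem's construction with $W := [-M,M]$ for some $M > 0$: for $k$ sufficiently large it produces a belief $\mu$ under which $s \succ r$ for every $w \in [-M,M]$, and simultaneously $\hat{r} \succ s$ at $w = 0$. Because $u$ is continuous, the maps $w \mapsto u(w)$ and $w \mapsto \mu u(w+\hat{\alpha}) + (1-\mu) u(w-\hat{\beta})$ are continuous, so the strict inequality $\hat{r} \succ s$ at $w=0$ extends to an open neighborhood $(-\delta,\delta)$ of $0$. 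Taking $[\underline{w},\overline{w}]$ to be any nondegenerate closed subinterval of $(-\delta,\delta) \cap [-M,M]$ gives the required interval.

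The main obstacle is really the Case 2 construction, and that heavy lifting has already been done in the appendix proof of Theorem \ref{theorem}; the only new ingredient is the continuity extension of the strict preference $\hat{r} \succ s$ from the single point $w=0$ to a small neighborhood, which is immediate from the continuity of $u$.
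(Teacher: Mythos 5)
Your proof is correct, and your first case (no actuarial worsening, risk-neutral $u(x)=x$, belief strictly between $\hat{\beta}/(\hat{\alpha}+\hat{\beta})$ and $\beta/(\alpha+\beta)$) is exactly the paper's. In the second case you take a genuinely different route. The paper does \emph{not} reuse the CARA-tailed utility from Theorem \ref{theorem}; it builds a fresh, even simpler utility: piecewise linear with a single kink at $\underline{w}-\hat{\beta}$ (slope $1$ below, slope $\iota$ above), applied to an arbitrary nondegenerate interval of length less than $\beta-\hat{\beta}$. That length restriction guarantees that for every $w$ in the interval the points $w$, $w+\hat{\alpha}$, $w-\hat{\beta}$ all lie in the flat region while $w-\beta$ falls below the kink, so the $s$-versus-$\hat{r}$ indifference belief is \emph{exactly} $\hat{\beta}/(\hat{\alpha}+\hat{\beta})$ uniformly on the interval, while the $s$-versus-$r$ indifference belief tends to $1$ as $\iota\downarrow 0$; no continuity or limiting-in-$k$ argument is needed, and the interval can be placed anywhere. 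Your approach instead piggybacks on the already-proved necessity construction with $W=[-M,M]$ and then extends the strict preference $\hat{r}\succ s$ from $w=0$ to a neighborhood by continuity of $u$. That is valid — the strict inequality at $0$ does persist on an open neighborhood, and $s\succ r$ already holds on all of $[-M,M]$ — and it buys economy (no new construction), at the cost of less control over the location and size of the resulting interval and a dependence on the heavier appendix machinery where the paper's argument is self-contained and elementary.
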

\begin{proof}
    As discussed above, if an actuarial worsening does not transpire, we can find a belief such that \(\hat{r} \succ s \succ r\) for a risk-neutral DM. So, suppose instead that an actuarial worsening happens but that \(\beta > \hat{\beta}\) and \(\alpha > \hat{\alpha}\). Take an arbitrary nondegenerate interval \(\left[\underline{w},\overline{w}\right]\) with \(\overline{w} -  \underline{w} < \beta - \hat{\beta}\); and define
    \[u(x) \coloneqq \begin{cases}
        x, \quad &\text{if} \quad x < \underline{w} - \hat{\beta}\\
        \iota x + (1-\iota) \left(\underline{w} - \hat{\beta}\right), \quad &\text{if} \quad x \geq \underline{w} - \hat{\beta}\text{,}
    \end{cases}\]
    for some \(\iota \in \left(0,1\right]\).

    Then, for all \(w \in \left[\underline{w},\overline{w}\right]\), the indifference belief between \(s\) and \(\hat{r}\) is \(\hat{\beta}/(\hat{\alpha}+\hat{\beta})\). On the other hand, for all \(w \in \left[\underline{w},\overline{w}\right]\), the indifference belief between \(s\) and \(r\) is
    \[\frac{\iota w + (1-\iota) \left(\underline{w} - \hat{\beta}\right) - (w - \beta)}{\iota w + (1-\iota) \left(\underline{w} - \hat{\beta}\right) - (w - \beta) + \iota \alpha}\text{,}\]
    which is strictly decreasing in \(\iota\) and equals \(1\) as \(\iota \downarrow 0\). 
    
    Consequently, there exists \(u \in \mathcal{U}\) and a belief \(\mu\) such that \(\hat{r} \succ s \succ r\).\end{proof}

\appendix

\section{Completion of Theorem \ref{theorem}'s Proof}\label{theoremproof}

We need to check that for all sufficiently large \(k\), for any \(w \in W\), the DM's indifference belief between \(s\) and \(r\) is strictly larger than \(\hat{\beta}/(\hat{\beta} + \hat{\alpha})\). We have already verified this for extreme wealths, but now need to do so for intermediate ones. The DM's indifference beliefs to be computed are for menu \(\left\{s,r\right\}\) and the formula is
\[\frac{u(w) - u(w-\beta)}{u(w+\alpha) - u(w-\beta)}\text{.}\]

\medskip

\noindent \textbf{Cases 1 \& 2.} When \(-\hat{\beta} \geq w > -\hat{\beta} - \alpha\), the indifference belief is
\[\mu_k^1 \coloneqq \frac{-\exp\left(-kw\right) + \exp\left(-k (w - \beta)\right)}{w + \alpha + \hat{\beta} - \exp\left(k\hat{\beta}\right) +\exp\left(-k(w-\beta)\right)}\text{,}\] if 
\(w + \alpha \leq \hat{\alpha}\); and it is
\[\mu_k^2 \coloneqq \frac{-\exp\left(-kw\right) + \exp\left(-k (w - \beta)\right)}{\hat{\alpha}+\exp\left(-k\hat{\alpha}\right)-\exp\left(-k(w+\alpha)\right) + \hat{\beta} - \exp\left(k\hat{\beta}\right) +\exp\left(-k(w-\beta)\right)}\text{,}\]
if
\(w + \alpha \geq \hat{\alpha}\). 

\(\mu^1_k\) simplifies to
\[-\frac{e^{\beta k}-1}{\left(e^{\hat{\beta} k}-w-\hat{\beta}-\alpha \right)e^{w k}-e^{\beta k}}\text{,}\]
which is larger than \(1 - \exp\left\{-\beta k\right\}\) for all sufficiently large \(k\). Thus, as \(k \to \infty\), \(\mu^1_k \to 1\). 

\(\mu^2_k\) simplifies to
\[\frac{1-e^{-\beta k}}{1-e^{\left(\hat{\beta}+w-\beta\right)k}+\left(\hat{\beta}+\hat{\alpha}\right)e^{-\left(\beta-w\right)k}+e^{\left(w-\hat{\alpha}-\beta\right)k}-e^{-\left(\alpha+\beta\right)k}}\text{.}\]
Both the numerator and the denominator converge to \(1\) as \(k \to \infty\), so \(\mu^2_k\) does as well.

\medskip

\noindent \textbf{Cases 3 \& 4.} When \(\hat{\alpha} \leq w < \hat{\alpha} + \beta\), the indifference belief is
\[\mu_k^3 \coloneqq \frac{\hat{\alpha}+\exp\left(-k\hat{\alpha}\right)-\exp\left(-kw\right) - (w-\beta)}{\hat{\alpha}+\exp\left(-k\hat{\alpha}\right)-\exp\left(-k(w+\alpha)\right) - (w-\beta)}\text{,}\] if 
\(w - \beta \geq -\hat{\beta}\); and it is
\[\mu_k^4 \coloneqq \frac{\hat{\alpha}+\exp\left(-k\hat{\alpha}\right)-\exp\left(-kw\right) + \hat{\beta} - \exp\left(k\hat{\beta}\right) +\exp\left(-k(w-\beta)\right)}{\hat{\alpha}+\exp\left(-k\hat{\alpha}\right)-\exp\left(-k(w+\alpha)\right) + \hat{\beta} - \exp\left(k\hat{\beta}\right) +\exp\left(-k(w-\beta)\right)}\text{,}\]
if
\(w - \beta < -\hat{\beta}\).

\(\mu^3_k\) simplifies to
\[\frac{\exp\left(-k \hat{a}\right)-\frac{1}{\exp\left(kw\right)}+ \hat{a}-(w-\beta)}{\exp\left(-k \hat{a}\right)-\frac{1}{\exp\left(k\left(w+\alpha\right)\right)}+ \hat{a}-(w-\beta)}\text{,}\]
which converges to \(1\) as \(k \to \infty\).

\(\mu^4_k\) simplifies to
\[\frac{\frac{\hat{\alpha} + \hat{\beta}}{\exp\left(-k(w-\beta)\right)}+\frac{1}{\exp\left(-k(w-\beta - \hat{\alpha})\right)}-\frac{1}{\exp\left(k\beta)\right)}-\frac{1}{\exp\left(-k(w-\beta+\hat{\beta})\right)}+1}{\frac{\hat{\alpha} + \hat{\beta}}{\exp\left(-k(w-\beta)\right)}+\frac{1}{\exp\left(-k(w-\beta - \hat{\alpha})\right)}-\frac{1}{\exp\left(k(\alpha+\beta)\right)}-\frac{1}{\exp\left(-k(w-\beta+\hat{\beta})\right)}+1}\text{,}\]
which converges to \(1\) as \(k \to \infty\).

\medskip

\noindent \textbf{Cases 5, 6, \& 7.} When \(-\hat{\beta} \leq w - \beta\) and \(w \leq \hat{\alpha} < w + \alpha\), the indifference belief is
\[\mu_k^5 \coloneqq \frac{\beta}{\hat{\alpha}+\exp\left(-k\hat{\alpha}\right)-\exp\left(-k(w+\alpha)\right) - (w - \beta)} \to \frac{\beta}{\hat{\alpha}-w+\beta}\text{,}\]
as \(k \to \infty\). Moreover, 
\[\frac{\beta}{\hat{\alpha}-w+\beta} > \frac{\hat{\beta}}{\hat{\alpha}+\hat{\beta}} \quad \Leftrightarrow \quad \hat{\alpha}\left(\beta - \hat{\beta}\right) + \hat{\beta} w > 0\text{,}\]
which is true.

If \(w-\beta < -\hat{\beta} \leq w\) and \(w+\alpha \leq \hat{\alpha}\),
\[\mu_k^6 \coloneqq \frac{w + \hat{\beta} - \exp\left(k\hat{\beta}\right) +\exp\left(-k(w-\beta)\right)}{w+\alpha + \hat{\beta} - \exp\left(k\hat{\beta}\right) +\exp\left(-k(w-\beta)\right)} \to 1\text{,}\]
as \(k \to \infty\).

Finally, if \(w-\beta < -\hat{\beta} \leq w \leq \hat{\alpha} < w + \alpha\),
\[\mu_k^7 \coloneqq \frac{w + \hat{\beta} - \exp\left(k\hat{\beta}\right) +\exp\left(-k(w-\beta)\right)}{\hat{\alpha}+\exp\left(-k\hat{\alpha}\right)-\exp\left(-k(w+\alpha)\right) + \hat{\beta} - \exp\left(k\hat{\beta}\right) +\exp\left(-k(w-\beta)\right)} \to 1\text{,}\]
as \(k \to \infty\).

Here is the sufficiency direction.
\begin{lemma}
    If the risky option becomes worse, the safe option must remain optimal.
\end{lemma}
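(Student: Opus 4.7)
The plan is to fix an arbitrary $w \in W$ and $u \in \mathcal{U}$ satisfying $s \succeq r$, rewrite this hypothesis in the differenced form
\[\mu\bigl[u_1(w+\alpha) - u_1(w)\bigr] \leq (1-\mu)\bigl[u_0(w) - u_0(w-\beta)\bigr],\]
and derive the analogue for $\hat{r}$ via a short chain that couples concavity of $u_0,u_1$ with the actuarial-worsening condition. I would split on the sign of $\hat{\alpha} - \alpha$.

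In the easy case $\hat{\alpha} \leq \alpha$, combined with the assumed $\hat{\beta} \geq \beta$ and strict monotonicity of each $u_\theta$, the gamble $\hat{r}$ is statewise dominated by $r$, so $s \succeq r$ trivially gives $s \succeq \hat{r}$. The substantive case is $\hat{\alpha} > \alpha$; there the actuarial-worsening inequality $\alpha\hat{\beta} \geq \beta\hat{\alpha}$ forces $\hat{\beta} > \beta$ strictly, matching the observation in the discussion after Theorem \ref{theorem}.

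The heart of the argument in that case is a pair of concavity bounds. Since $u_1$ is concave, the secant slope from $w$ is nonincreasing in the horizon, giving $u_1(w+\hat{\alpha}) - u_1(w) \leq (\hat{\alpha}/\alpha)[u_1(w+\alpha) - u_1(w)]$. Since $u_0$ is concave, the secant slope ending at $w$ is nondecreasing as the left endpoint recedes, giving $u_0(w) - u_0(w-\hat{\beta}) \geq (\hat{\beta}/\beta)[u_0(w) - u_0(w-\beta)]$. Multiplying the hypothesis by $\hat{\alpha}/\alpha > 0$ and slotting in the ratio inequality $\hat{\alpha}/\alpha \leq \hat{\beta}/\beta$ from Definition \ref{worse} between the two concavity bounds yields the chain
\begin{align*}
\mu[u_1(w+\hat{\alpha})-u_1(w)] &\leq \tfrac{\hat{\alpha}}{\alpha}\,\mu[u_1(w+\alpha)-u_1(w)]\\
&\leq \tfrac{\hat{\alpha}}{\alpha}\,(1-\mu)[u_0(w)-u_0(w-\beta)]\\
&\leq \tfrac{\hat{\beta}}{\beta}\,(1-\mu)[u_0(w)-u_0(w-\beta)]\\
&\leq (1-\mu)[u_0(w)-u_0(w-\hat{\beta})],
\end{align*}
which rearranges to $s \succeq \hat{r}$ at $w$. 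Since $w \in W$ was arbitrary, the lemma follows.

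There is no real obstacle: the only thing to spot is that concavity yields an \emph{upper} bound on the state-$1$ gain (scaled by $\hat{\alpha}/\alpha$) and a \emph{lower} bound on the state-$0$ loss (scaled by $\hat{\beta}/\beta$), and actuarial worsening is exactly the ratio inequality that glues these two scalings together.
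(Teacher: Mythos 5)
Your proof is correct and follows essentially the same route as the paper: the same case split on the sign of $\hat{\alpha}-\alpha$, statewise dominance in the easy case, and in the substantive case two chord-slope (three-chord lemma) bounds glued together by the actuarial-worsening ratio. The only difference is presentational—you chain inequalities on the differenced expected utilities directly, while the paper equivalently chains them on the indifference-belief fractions.
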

\begin{proof}
    Let \(\hat{\beta} \geq \beta\) and \(\alpha/\beta \geq \hat{\alpha}/\hat{\beta}\). If \(\alpha \geq \hat{\alpha}\), \(r\) weakly dominates \(\hat{r}\), so for all \(w \in W\), we must have \(s \succeq r \succeq \hat{r}\). If \(\alpha < \hat{\alpha}\), for all \(w \in W\), starting with the indifference belief between \(s\) and \(\hat{r}\), we have
    \[\begin{split}
        \frac{u_0\left(w\right) - u_0\left(w-\hat{\beta}\right)}{u_0\left(w\right) - u_0\left(w-\hat{\beta}\right)+u_1\left(w+\hat{\alpha}\right) - u_1\left(w\right)} &= \frac{\frac{u_0\left(w\right) - u_0\left(w-\hat{\beta}\right)}{w - \left(w-\hat{\beta}\right)}}{\frac{u_0\left(w\right) - u_0\left(w-\hat{\beta}\right)}{w - \left(w-\hat{\beta}\right)} + \frac{u_1\left(w+\hat{\alpha}\right) - u_1\left(w\right)}{w - \left(w-\hat{\beta}\right)}}\\
        &\geq \frac{u_0\left(w\right) - u_0\left(w-\beta\right)}{u_0\left(w\right) - u_0\left(w-\beta\right) + \frac{\beta}{\hat{\beta}}\hat{\alpha}\frac{u_1\left(w+\hat{\alpha}\right) - u_1\left(w\right)}{w+\hat{\alpha}-w}}\\
        &\geq \frac{u_0\left(w\right) - u_0\left(w-\beta\right)}{u_0\left(w\right) - u_0\left(w-\beta\right) + \alpha\frac{u_1\left(w+\hat{\alpha}\right) - u_1\left(w\right)}{w+\hat{\alpha}-w}}\\
        &\geq \frac{u_0\left(w\right) - u_0\left(w-\beta\right)}{u_0\left(w\right) - u_0\left(w-\beta\right)+u_1\left(w+\alpha\right) - u_1\left(w\right)}\text{,}
    \end{split}\]
    which is the indifference belief between \(s\) and \(r\);
    where the first and third inequalities follow from the Three-chord lemma (Theorem 1.16 in \cite*{phelps2009convex}), and the second inequality from Inequality \ref{in1}.
\end{proof}

\bibliography{sample.bib}

\end{document}